\theoremstyle{plain}
\newtheorem{theorem}{Theorem}[section]
\newtheorem{proposition}[theorem]{Proposition}
\newtheorem{lemma}[theorem]{Lemma}
\theoremstyle{definition}
\theoremstyle{remark}
\newcommand{\E}{\mathbb{E}}
\definecolor{highlight}{RGB}{255,255,0} 
\begin{document}

\twocolumn[
\icmltitle{
    A Multivariate Unimodality Test Harnessing the Dip Statistic of Mahalanobis Distances Over Random Projections
}



\icmlsetsymbol{equal}{*}

\begin{icmlauthorlist}
\icmlauthor{Prodromos Kolyvakis}{ovr}
\icmlauthor{Aristidis Likas}{ioa}
\end{icmlauthorlist}

\icmlaffiliation{ovr}{ORamaVR SA, Geneva, Switzerland}
\icmlaffiliation{ioa}{Department of Computer Science and Engineering, University of Ioannina, 45110 Ioannina, Greece}

\icmlcorrespondingauthor{Prodromos Kolyvakis}{prodromos.kolyvakis@oramavr.com}

\icmlkeywords{multivariate unimodality test, alpha-unimodality, dip test, random projections, clustering, cluster number estimation}

\vskip 0.3in
]



\printAffiliationsAndNotice{}  

\begin{abstract}
Unimodality, pivotal in statistical analysis, offers insights into dataset structures and drives sophisticated analytical procedures.
While unimodality's confirmation is straightforward for one-dimensional data using methods like Silverman's approach and Hartigans' dip statistic, its generalization to higher dimensions remains challenging.
By extrapolating one-dimensional unimodality principles to multi-dimensional spaces through linear random projections and leveraging point-to-point distancing, our method, rooted in $\alpha$-unimodality assumptions, presents a novel multivariate unimodality test named \textit{mud-pod}.
Both theoretical and empirical studies confirm the efficacy of our method in unimodality assessment of multidimensional datasets as well as in estimating the number of clusters.    
\end{abstract}

\section{Introduction}
\label{intro}

Unimodality, a fundamental concept in statistical analysis, serves as a critical lens through which one can decipher the inherent structure and patterns within datasets.
Understanding unimodality is paramount for multiple reasons.
Firstly, it provides a rudimentary insight into the nature of the data, highlighting whether the data points converge towards a common central tendency or deviate significantly.
Secondly, unimodality serves as a precursor to more complex analytical procedures, such as clustering algorithms, determining their necessity, and potentially influencing their outcomes \cite{dip-means,daskalakis_2013,daskalakis_2014}.
In essence, the importance of unimodality transcends mere statistical significance, extending its value to practical, real-world applications.

In one-dimensional data, unimodality can be fundamentally understood as the task of discerning whether a given distribution exhibits a single prominent peak or mode.
A notable advantage is that one-dimensional unimodality can be confirmed using robust statistical hypothesis tests,  particularly for one-dimensional data.
Methods such as Silverman's approach, exploiting fixed-width kernel density estimates \cite{silverman}, the widely recognized Hartigans' dip statistic \cite{hartigans} and more recently the UU-test \cite{chasani2022} are prime examples.

Nevertheless, when transitioning to higher dimensions, the process of defining unimodality becomes less straightforward, even when considering only symmetric distributions.
The intricacies of multi-dimensional spaces impose challenges that are not present in one-dimensional settings, leading to diverse interpretations and approaches to gauge unimodality.
Even worse, these intricacies make the generalization of unimodality tests notably challenging.
Numerous efforts have been made to capture the geometric essence of unimodality in \( \mathbb{R}^d \) ($d>1$) and translate it into an analytical framework \cite{Dai_1989}.
In a seminal work by \citet{alphaunim}, a definition of generalized unimodality characterized by a positive parameter \(\alpha\) was proposed (called $\alpha$-unimodality), which is pertinent to distributions across \(\mathbb{R}^d\) and offers a broader perspective that encompasses many aspects of 1-dimensional unimodality (\citealp[Chapter 3.2]{dharmadhikari1988unimodality}). 
Despite the various definitions, however, few methods are available for assessing unimodality in multidimensional data vectors.

In parallel, another line of research has delved into the use of random projections as a strategy for capturing the essence of multi-dimensional distributions.
Random projections, known for its efficacy in dimensionality reduction, have shown significant potential for learning mixtures of Gaussians \cite{Dasgupta1999LearningMO}.
Additionally, the Diaconis-Freedman effect elucidates the behavior of random projections of probability distributions in the high-dimensional space \cite{diaconis}.
Specifically, for a given probability distribution \( P \) in a \( d \)-dimensional space, when we consider a dimension \( q \) much smaller than \( d \), the majority of the \( q \)-dimensional projections of \( P \) resemble scaled mixtures of spherically symmetric Gaussian distributions \cite{dumbgen2013low}.
Consequently, random projections appear to be a potent tool for the analysis of unimodality as they facilitate the transformation of the problem into a seemingly simpler space.
However, not every random projection is conducive to our analysis; many projections can obfuscate distinct modes by distorting distances.
Consequently, we limit our approach to a family of random projections that, with a certain probability, maintain pairwise distances.

In this work, we tackle the complexity of extending unimodality testing to higher dimensions.
Echoing the principles of $\alpha$-unimodality, we introduce a novel algorithm for efficient multivariate unimodality testing.
Our approach bridges the gap between the simplicity of one-dimensional unimodality confirmation and the intricacies of its higher-dimensional counterpart.
Central to our investigation are the $\alpha$-unimodality preserving properties of point-to-point distancing and linear random projections.
We demonstrate that linear random projections preserve the $\alpha$-unimodality property in Mahalanobis distances from a reference point.
Leveraging these one-dimensional Mahalanobis distances, we apply the dip test for unimodality detection.
Employing various random projections, akin to Monte Carlo simulations, we assess the $\alpha$-unimodality of the original data distribution.
In summary, the contribution of our work is two-fold.
Firstly, to the best of our knowledge, we propose the first mathematically founded multivariate unimodality test, dubbed \textit{mud-pod}\footnote{Multivariate Unimodality Dip based on random Projections, Observers \& Distances.}.
Secondly, we present the \textit{mp-means} incremental clustering method which is a wrapper around k-means exploiting mudpod for unimodality assessment.
Both theoretical findings and empirical validations underpin our methods, showcasing efficacy in unimodality assessments and clustering scenarios.

\section{Related Work}
\label{related-work}

A multivariate unimodality test that aligns closely with our work is the \textit{dip-dist}, introduced by \citet{dip-means}.
This criterion aims to ascertain the modality (unimodal vs. multimodal) of a dataset by applying the unidimensional dip test on each row of the pairwise distance matrix of the dataset.
The rationale behind this approach is that by selecting an arbitrary data point and calculating its distances to all other points, we obtain a snapshot of the underlying cluster morphology.
In presence of a single cluster, the distribution of distances is expected to be unimodal.
To further the applicability of this criterion, it has been integrated into a clustering method named \textit{dip-means}.
This incremental algorithm employs cluster splitting based on the dip-dist to determine if a cluster should be divided.
Consequently, dip-means can automatically estimate the cluster count.
However, the dip-dist criterion is not without its limitations.
A glaring drawback is its reliance on pairwise distances and its operation within the original data space, which can present challenges in certain scenarios.
Moreover, the dip-dist method operates in an ad-hoc manner, lacking a rigorous mathematical foundation.
In our work, we address these limitations by introducing random projections to assess unimodality on randomly projected distances.
This approach permits Monte Carlo hypothesis testing by enabling sampling, previously not directly feasible in the original space.
Additionally, we leverage the Mahalanobis distance and we empirically demonstrate its added benefits.
Last but not least, we provide a mathematical formulation and prove the consistency of our test, thereby establishing the missing foundation for the dip-dist method.

The folding test, introduced in \cite{folding-test}, offers a versatile evaluation method suitable for both univariate and multivariate scenarios.
This test revolves around the concept of \textit{folding}, involving three key steps:
\begin{inparaenum}[(a)]
    \item folding the distribution with respect to a designated pivot \(s\),
    \item calculating the variance of the folded distribution, and lastly,
    \item comparing this folded variance to the original variance.
\end{inparaenum}
The central idea behind the folding test is that when applied to multimodal distributions, the folded distribution typically exhibits a significantly reduced variance compared to its unimodal counterparts.
A limitation of the folding test is its reliance on the empirical assumption that folding a multimodal distribution leads to a reduction in variance.
Consequently, the concept of unimodality is not explicitly integrated into the folding test computation. It is important to note that there are cases where this assumption does not hold true, resulting in incorrect outcomes for the folding test \cite{chasani2022}.
Another research direction focuses on examining particular families of unimodal distributions.
In the work of \citet{log-concavity-test}, a scalable test for log-concavity is elucidated building on maximum likelihood estimation (MLE), validated in finite samples across any dimension.
A noteworthy empirical observation from their research is the pronounced efficacy achieved by adopting random projections.
However, it is crucial to acknowledge that while log-concave functions capture a substantial subset of unimodal functions, they fall short of encompassing the entirety of the concept and may struggle to extend their applicability to more diverse and real-world scenarios.

\section{Methods}
\label{method}

\subsection{Preliminaries}
\label{preliminaries}
In the following, we present key notation and foundational background applicable throughout this paper.
We use capital letters to denote random variables or matrices and boldface type to represent vectors.

\subsubsection{\(\alpha\)-Unimodal Distrubutions}

Without loss of generality, we assume that the mode of the unimodal distribution is at \(\mathbf{0}\).
A random d-vector \(\mathbf{X} \in \mathbb{R}^d\) is said to have an \(\alpha\)-\textit{unimodal distrubution} about \(\mathbf{0}\) if, for every bounded, nonnegative, Borel measurable function \(g\) on \(\mathbb{R}^d\) the quantity \(t^\alpha\E[g(t\mathbf{X})]\) is nondecreasing in \(t \in (0, \infty)\).
In what follows, we use the notation \( \mathbf{X} \sim \mathcal{P_{\alpha}} \) to represent a d-vector following an \( \alpha \)-unimodal distribution.
It follows from the defintion that if \(\mathbf{X} \sim \mathcal{P_{\alpha}}\) and \(\alpha < \beta\), then \(\mathbf{X} \sim \mathcal{P_{\beta}}\).
An important equivalent characterization for the set of \(\alpha\)-unimodal distrubutions is the Decomposition theorem:
\( \mathbf{X} \sim \mathcal{P_{\alpha}} \) iff \(\mathbf{X}\) is distributed as \(U^{\frac{1}{\alpha}}\mathbf{Z}\) where \(U\) is uniform on \((0,1)\) and \(\mathbf{Z}\) is independent of \(U\) \cite{alphaunim}.

This theorem closely mirrors the intuition of one-dimensional unimodality, since \citet{khintchine1938unimodal} demonstrated that a real random variable \( X \) has a unimodal distribution iff \( X \sim U Z \), where \( U \) is uniform on \([0,1]\) and \( U \) and \( Z \) are independent.
It follows that a scalar \( X \sim \mathcal{P_{\alpha}} \) iff \( X^{\alpha} \) is unimodal as per the standard definition in \(\mathbb{R}\).
Next, we present and prove three pivotal properties of \(\alpha\)-unimodality, i.e., the translation, norm, and projection properties.

\begin{lemma}[Translation Property]
\label{trans_prop}
Let \( \mathbf{X} \sim \mathcal{P_{\alpha}} \) and \(\mathbf{c} \in \mathbb{R}^d\), then \( \mathbf{X} + \mathbf{c} \sim \mathcal{P_{\alpha}} \) .
\end{lemma}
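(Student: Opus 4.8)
The plan is to reduce the claim directly to the hypothesis, exploiting the affine structure built into the definition; the Decomposition theorem quoted above makes this especially transparent. First I would pin down the meaning of the conclusion: since the ambient definition fixes the mode at $\mathbf{0}$, the statement ``$\mathbf{X}+\mathbf{c}\sim\mathcal{P_{\alpha}}$'' is to be read as ``$\mathbf{X}+\mathbf{c}$ is $\alpha$-unimodal about $\mathbf{c}$'', i.e.\ the centered vector $(\mathbf{X}+\mathbf{c})-\mathbf{c}$ is $\alpha$-unimodal about $\mathbf{0}$ in the sense of the preliminaries. Making this convention explicit is the only genuinely delicate point.

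With that reading, the core argument is a one-liner. By the Decomposition theorem, $\mathbf{X}\sim\mathcal{P_{\alpha}}$ is equivalent to $\mathbf{X}\stackrel{d}{=}U^{1/\alpha}\mathbf{Z}$ with $U$ uniform on $(0,1)$ and $\mathbf{Z}$ independent of $U$. Then $(\mathbf{X}+\mathbf{c})-\mathbf{c}=\mathbf{X}\stackrel{d}{=}U^{1/\alpha}\mathbf{Z}$, and invoking the converse direction of the Decomposition theorem yields that $\mathbf{X}+\mathbf{c}$ is $\alpha$-unimodal about $\mathbf{c}$, as desired. Equivalently, arguing straight from the definition: for every bounded, nonnegative, Borel measurable $g$ on $\mathbb{R}^d$ we have $t^\alpha\,\E[g(t\,((\mathbf{X}+\mathbf{c})-\mathbf{c}))]=t^\alpha\,\E[g(t\mathbf{X})]$, which is nondecreasing in $t\in(0,\infty)$ precisely because $\mathbf{X}\sim\mathcal{P_{\alpha}}$.

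There is no real analytic obstacle here; the ``hard part'' is purely bookkeeping, namely being careful that translation carries the mode from $\mathbf{0}$ to $\mathbf{c}$ and that $\alpha$-unimodality ``about $\mathbf{c}$'' is defined as $\alpha$-unimodality of the shifted variable ``about $\mathbf{0}$''. Once that convention is in place the lemma is immediate, and it will subsequently act as a normalization device that lets one assume without loss of generality that the mode sits at the origin when proving the norm and projection properties.
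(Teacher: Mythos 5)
Your proof is correct, and it takes a genuinely different --- and in fact more careful --- route than the paper's. The paper argues directly from the definition while keeping the mode at $\mathbf{0}$: it writes $t^\alpha\E[g(t\mathbf{X}+t\mathbf{c})]=t^\alpha\E[h(t\mathbf{X})]$ with $h(\mathbf{x})=g(\mathbf{x}+\mathbf{c})$. That substitution does not work as written: $h(t\mathbf{X})=g(t\mathbf{X}+\mathbf{c})$, not $g(t\mathbf{X}+t\mathbf{c})$, and the function that would make the identity true, $h_t(\mathbf{x})=g(\mathbf{x}+t\mathbf{c})$, depends on $t$, so the monotonicity of $t^\alpha\E[h_t(t\mathbf{X})]$ no longer follows from the $\alpha$-unimodality of $\mathbf{X}$. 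Indeed, the literal claim ``$\mathbf{X}+\mathbf{c}$ is $\alpha$-unimodal about $\mathbf{0}$'' fails in general (translate a point mass at $\mathbf{0}$ to $\mathbf{c}\neq\mathbf{0}$ and test with the indicator of a small ball around $\mathbf{c}$). Your reading --- that translation carries the mode to $\mathbf{c}$, so the assertion is $\alpha$-unimodality of the recentered vector $(\mathbf{X}+\mathbf{c})-\mathbf{c}=\mathbf{X}$ about $\mathbf{0}$ --- is the one consistent with the paper's ``without loss of generality the mode is at $\mathbf{0}$'' convention and with how the lemma is actually used later (recentering at an observer $\mathbf{o}$ before applying the projection and norm properties). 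Under that reading your one-line argument, via either the Decomposition theorem or the definition, is complete; what it buys over the paper's version is validity, at the modest price of stating the mode-tracking convention explicitly.
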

\begin{proof} Let \(t^\alpha\E[g(t\mathbf{X} +t\mathbf{c})] = t^\alpha\E[h(t\mathbf{X})]\), where \(h(\mathbf{x}) = g(\mathbf{x}+\mathbf{c})\).
Note that \(h\) is bounded, nonnegative, Borel measurable and that the first expression is nondecreasing iff the last expression is nondecreasing in \(t\).
\end{proof}

\begin{lemma}[Norm Property]
\label{norm_prop}
If \( \mathbf{X} \sim \mathcal{P_{\alpha}} \), then \( ||\mathbf{X}|| \sim \mathcal{P_{\alpha}} \).
\end{lemma}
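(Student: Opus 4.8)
The plan is to mimic the short argument used for the Translation Property: exhibit an admissible test function on $\mathbb{R}^d$ that reduces the one-dimensional $\alpha$-unimodality condition for the scalar $\|\mathbf{X}\|$ to the $d$-dimensional one for $\mathbf{X}$. Concretely, fix an arbitrary bounded, nonnegative, Borel measurable function $g$ on $\mathbb{R}$ and define $h\colon\mathbb{R}^d\to\mathbb{R}$ by $h(\mathbf{x})=g(\|\mathbf{x}\|)$. Since $\mathbf{x}\mapsto\|\mathbf{x}\|$ is continuous (hence Borel) and $g$ is Borel, the composition $h$ is Borel measurable; it is bounded and nonnegative because $g$ is, so $h$ is an admissible test function in the definition of $\alpha$-unimodality on $\mathbb{R}^d$.

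The key identity is the positive homogeneity of the norm: for every $t\in(0,\infty)$ we have $t\|\mathbf{X}\|=\|t\mathbf{X}\|$, hence $g(t\|\mathbf{X}\|)=h(t\mathbf{X})$ pointwise, and therefore $t^\alpha\,\E[g(t\|\mathbf{X}\|)]=t^\alpha\,\E[h(t\mathbf{X})]$ for all $t\in(0,\infty)$. By the hypothesis $\mathbf{X}\sim\mathcal{P_{\alpha}}$, the right-hand side is nondecreasing in $t$, so the left-hand side is as well. As $g$ was an arbitrary bounded nonnegative Borel function on $\mathbb{R}$, the scalar random variable $\|\mathbf{X}\|$ satisfies the defining property of $\alpha$-unimodality about $0$, i.e.\ $\|\mathbf{X}\|\sim\mathcal{P_{\alpha}}$.

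As an alternative route one can invoke the Decomposition theorem: writing $\mathbf{X}\overset{d}{=}U^{\frac{1}{\alpha}}\mathbf{Z}$ with $U$ uniform on $(0,1)$ and $\mathbf{Z}$ independent of $U$, positive homogeneity gives $\|\mathbf{X}\|\overset{d}{=}U^{\frac{1}{\alpha}}\|\mathbf{Z}\|$, where $\|\mathbf{Z}\|$ is a (nonnegative) scalar random variable independent of $U$; applying the theorem in the converse direction with $d=1$ then yields $\|\mathbf{X}\|\sim\mathcal{P_{\alpha}}$. I do not expect a genuine obstacle here: the only points requiring a line of justification are that $\mathbf{x}\mapsto g(\|\mathbf{x}\|)$ stays bounded, nonnegative and Borel, and that the monotonicity in $t$ transfers through the substitution $t\|\mathbf{X}\|=\|t\mathbf{X}\|$, both of which are immediate.
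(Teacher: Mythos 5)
Your proposal is correct, but your primary argument takes a different route from the paper. The paper proves the Norm Property in one line via the Decomposition theorem: writing $\mathbf{X} \overset{d}{=} U^{\frac{1}{\alpha}}\mathbf{Z}$, it computes $\|\mathbf{X}\| = \sqrt{(U^{\frac{1}{\alpha}}\mathbf{Z})^\top (U^{\frac{1}{\alpha}}\mathbf{Z})} = U^{\frac{1}{\alpha}}\|\mathbf{Z}\|$ and concludes by the converse direction of the theorem in dimension one --- exactly the ``alternative route'' you sketch at the end. Your main argument instead verifies the definition directly: for a bounded, nonnegative, Borel $g$ on $\mathbb{R}$ you set $h(\mathbf{x}) = g(\|\mathbf{x}\|)$ and use positive homogeneity, $t\|\mathbf{X}\| = \|t\mathbf{X}\|$ for $t > 0$, so that $t^\alpha \E[g(t\|\mathbf{X}\|)] = t^\alpha \E[h(t\mathbf{X})]$ is nondecreasing by hypothesis. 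This is sound and has the virtue of being self-contained (no appeal to the Decomposition theorem) and of exactly mirroring the paper's own proof of the Translation Property, so the two lemmas would read uniformly; the paper's route is shorter once the Decomposition theorem is in hand and makes visible that the same uniform multiplier $U^{\frac{1}{\alpha}}$ carries over to the scalar $\|\mathbf{Z}\|$. Your observation that the Decomposition route needs the theorem's converse direction applied with $d=1$ is the one point the paper leaves implicit, and you handle it correctly.
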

\begin{proof}
\(
||\mathbf{X}|| = \sqrt{(U^{\frac{1}{\alpha}}\mathbf{Z})^\top (U^{\frac{1}{\alpha}}\mathbf{Z})} = U^{\frac{1}{\alpha}} ||\mathbf{Z}||
\)
\end{proof}
\begin{lemma}[Projection Property]
\label{proj_prop}
Let \( \mathbf{X} \sim \mathcal{P_{\alpha}} \) and let \(A\) be a real matrix from \(\mathbb{R}^d\) to \(\mathbb{R}^q\), then \( A\mathbf{X} \sim \mathcal{P_{\alpha}} \).
\end{lemma}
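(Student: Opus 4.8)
The plan is to reduce the claim to the definition of $\alpha$-unimodality by pre-composing the test function with $A$, mirroring the argument used for \Cref{trans_prop}. First I would fix an arbitrary bounded, nonnegative, Borel measurable function $g$ on $\mathbb{R}^q$ and observe that, by linearity of $A$, $g(tA\mathbf{X}) = g\bigl(A(t\mathbf{X})\bigr) = h(t\mathbf{X})$ where $h := g \circ A$. Then I would check that $h$ inherits the required regularity: it is clearly bounded and nonnegative, and it is Borel measurable on $\mathbb{R}^d$ since $A$ is linear (hence continuous, hence Borel measurable) and the composition of a Borel function with a continuous one is Borel. Consequently $t^\alpha \E[g(tA\mathbf{X})] = t^\alpha \E[h(t\mathbf{X})]$ is nondecreasing in $t \in (0,\infty)$ because $\mathbf{X} \sim \mathcal{P_{\alpha}}$; as $g$ was arbitrary, this is exactly the assertion $A\mathbf{X} \sim \mathcal{P_{\alpha}}$.

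An alternative I would mention is the route through the Decomposition theorem: since $\mathbf{X}$ is distributed as $U^{\frac{1}{\alpha}}\mathbf{Z}$ with $U$ uniform on $(0,1)$ independent of $\mathbf{Z}$, linearity gives that $A\mathbf{X}$ is distributed as $U^{\frac{1}{\alpha}}(A\mathbf{Z})$, and $A\mathbf{Z}$ is independent of $U$ because $A$ is a fixed (non-random) matrix; invoking the Decomposition theorem in $\mathbb{R}^q$ for the pair $(U, A\mathbf{Z})$ then yields $A\mathbf{X} \sim \mathcal{P_{\alpha}}$. I would favor the first argument, as it is self-contained and does not presuppose the Decomposition theorem in the target dimension $q$.

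The main thing to be careful about — and essentially the only obstacle — is that $A$ need not be square or invertible, so one cannot argue by a density change-of-variables; the composition trick avoids this entirely, and the only technical point worth spelling out is the Borel measurability of $h = g \circ A$. Everything else is a one-line verification.
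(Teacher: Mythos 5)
Your proposal is correct, but your preferred argument differs from the paper's. The paper proves \Cref{proj_prop} in one line by invoking the Decomposition theorem: $\mathbf{X}$ is distributed as $U^{1/\alpha}\mathbf{Z}$, hence $A\mathbf{X}$ is distributed as $U^{1/\alpha}(A\mathbf{Z})$ with $A\mathbf{Z}$ independent of $U$, which is exactly your ``alternative'' route. Your primary argument instead works directly from the definition of $\alpha$-unimodality by pre-composing the test function with $A$, exactly as the paper does for \Cref{trans_prop}; the only nontrivial point is the Borel measurability of $g \circ A$, which you correctly dispatch via continuity of the linear map. Both arguments are valid. The definitional route is more self-contained and makes the three lemmas (\Cref{trans_prop}, \Cref{norm_prop}, \Cref{proj_prop}) stylistically uniform, while the paper's route is shorter given that the Decomposition theorem is already stated as an equivalence for $d$-vectors in arbitrary dimension (so your worry about ``presupposing it in dimension $q$'' is not actually an obstacle). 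Your observation that $A$ need not be square or invertible, so a density change-of-variables would fail, is a good one and correctly identifies why the composition (or decomposition) trick is the right tool.
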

\begin{proof}
A direct use of the Decomposition Theorem.
\end{proof}

Leveraging the foundational properties described above, we now present a salient result.
This result not only fortifies our understanding of \(\alpha\)-unimodal distributions but will also play an instrumental role in the rest of the paper.

\begin{lemma}[Mahalanobis]
\label{mahalanobis}
Let \( \mathbf{X} \sim \mathcal{P_{\alpha}} \) with a well-defined covariance matrix \(\Sigma\) and \(\mathbf{o} \in \mathbb{R}^d\), then the distribution of the Mahalanobis distances with respect to \(\mathbf{o}\), given by
\(
\sqrt{(\mathbf{X}-\mathbf{o})^\top{\Sigma}^{-1}(\mathbf{X}-\mathbf{o})}
\)
is \(\alpha\)-unimodal.
\end{lemma}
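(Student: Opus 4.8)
The plan is to reduce the Mahalanobis distance to an ordinary Euclidean norm of a linearly transformed, recentred copy of $\mathbf{X}$, and then simply chain the three properties just established. First I would apply the Translation Property (\Cref{trans_prop}) with $\mathbf{c} = -\mathbf{o}$ to get $\mathbf{X} - \mathbf{o} \sim \mathcal{P_{\alpha}}$.

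Next, since $\Sigma$ is a well-defined (hence symmetric positive definite) covariance matrix, so is $\Sigma^{-1}$, and it admits a factorization $\Sigma^{-1} = A^\top A$ for some invertible $A \in \mathbb{R}^{d\times d}$ (e.g.\ $A = \Sigma^{-1/2}$, the symmetric positive definite square root of $\Sigma^{-1}$, or a Cholesky factor). With this factorization,
\[
(\mathbf{X}-\mathbf{o})^\top \Sigma^{-1}(\mathbf{X}-\mathbf{o}) = (\mathbf{X}-\mathbf{o})^\top A^\top A (\mathbf{X}-\mathbf{o}) = \big\| A(\mathbf{X}-\mathbf{o})\big\|^2 ,
\]
so the Mahalanobis distance with respect to $\mathbf{o}$ equals $\big\| A(\mathbf{X}-\mathbf{o})\big\|$.

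Then I would invoke the remaining two lemmas: the Projection Property (\Cref{proj_prop}) applied to the matrix $A$ yields $A(\mathbf{X}-\mathbf{o}) \sim \mathcal{P_{\alpha}}$, and the Norm Property (\Cref{norm_prop}) then yields $\big\| A(\mathbf{X}-\mathbf{o})\big\| \sim \mathcal{P_{\alpha}}$. Hence the scalar random variable $\sqrt{(\mathbf{X}-\mathbf{o})^\top\Sigma^{-1}(\mathbf{X}-\mathbf{o})}$ is $\alpha$-unimodal, which is exactly the claim (and, by the one-dimensional characterization noted above, equivalent to saying its $\alpha$-th power is unimodal on $\mathbb{R}$).

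The argument is essentially bookkeeping once the factorization is in place; the only point that requires care is the hypothesis that $\Sigma$ is nondegenerate, which is precisely what makes $\Sigma^{-1}$ — and therefore the factor $A$ — well defined. A singular $\Sigma$ would force one to restrict to the affine support of $\mathbf{X}$ and use a pseudo-inverse, but under the stated assumption this is unnecessary, so I do not anticipate any real obstacle beyond stating the decomposition $\Sigma^{-1} = A^\top A$ and the identity $\mathbf{v}^\top A^\top A\,\mathbf{v} = \|A\mathbf{v}\|^2$ cleanly.
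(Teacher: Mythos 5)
Your proof is correct and follows essentially the same route as the paper: both express the Mahalanobis distance as $\|\Sigma^{-1/2}(\mathbf{X}-\mathbf{o})\|$ via the square-root factorization and then chain the translation, projection, and norm lemmas. If anything, you are slightly more complete than the paper's one-line proof, which cites only the translation and projection lemmas while implicitly also needing the Norm Property for the final step.
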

\begin{proof}
Given a positive semidefinite covariance matrix \(\Sigma\), and utilizing the matrix square root decomposition, the Mahalanobis distance can be expressed as:
\begin{align*}
\sqrt{(\mathbf{x}-\mathbf{o})^\top\Sigma^{-1}(\mathbf{x}-\mathbf{o})} &= ||{\Sigma}^{-\frac{1}{2}}(\mathbf{x}-\mathbf{o})||
\end{align*}
Proof stems from translation and projection lemmas.
\end{proof}

Mahalanobis distance possesses distinct properties: it is unitless, scale-invariant, and considers the covariance structure across all dimensions.
Traditionally, it has been employed in multivariate hypothesis testing.
Notably, the Hotelling's $T^2$ statistic \cite{t-squared}, which generalizes the Student's t-statistic, exemplifies its usage.
The Mahalanobis distance is pivotal to our multivariate unimodality test.

\subsubsection{Dip Test}

The dip test serves as a tool for discerning multimodality within a unidimensional dataset.
It gauges this by examining the maximum deviation, i.e., the Kolmogorov-Smirnov statistic, between the empirical cumulative distribution function (e.c.d.f.), $F(t)$, and the nearest unimodal c.d.f., $G(t)$.
The dip statistic for a distribution function \( F \) is defined as:
\( \text{dip}(F) = \min_{G \in U} \rho(F, G) \),
where \( \rho(F, G) = \max_t |F(t) - G(t)| \) and \( U \) represents the set of all possible unimodal distributions.
The dip test's significance is highlighted by its ability to unveil the least among the most substantial deviations between the empirical cumulative distribution function \( F \) of the univariate dataset and the c.d.f.s of the class of unimodal distributions.
A salient attribute of the dip statistic is its convergence as the sample size burgeons, such that \( \lim_{{n \to \infty}} \text{dip}(F_n) = \text{dip}(F) \) \cite{hartigans}.
Moreover, the class of uniform distributions \( U \) is acclaimed to be the most fitting for the null hypothesis, owing to its stochastically larger dip values compared to other unimodal distributions.
To calculate the dip value, the e.c.d.f. of the data is considered, and the unimodal piecewise linear function with the smallest maximum distance to the e.c.d.f. is determined.
The p-value for unimodality, derived via bootstrap samples, functions as a determinant for the dataset's modality.
A dataset with a p-value greater than \( a \) indicates unimodality; otherwise, multimodality is suggested.

\subsubsection{Random Projections}

The Diaconis-Freedman effect can be a valuable tool for unimodality analysis, simplifying the problem by likely transforming it into a Gaussian mixture model.
When considering a probability distribution \( P \) in a \( d \)-dimensional space, most \( q \)-dimensional projections of \( P \) with \( q \ll d \) resemble scale mixtures of spherically symmetric Gaussian distributions.
Additionally, linear random projections can preserve distances when projecting high-dimensional points into lower-dimensional spaces.
This phenomenon is encapsulated in the celebrated \citeauthor{Johnson1984ExtensionsOL} lemma presented below \cite{Johnson1984ExtensionsOL,jl_granda}:

\textbf{JL Lemma}:
Let \(S := \{x_i\}_{i=1}^{k}\) be a subset of \(\mathbb{R}^d\) and \(\epsilon > 0\). Then, let \(\Pi \in \mathbb{R}^{d \times q}\), where \(q \geq \mathit{8 \log(k)/\epsilon^2}\), be a random matrix with i.i.d. entries \(\Pi_{ij} \sim \mathcal{N}(0, \mathit{1/d)}\). With probability at least \(\mathit{1/k}\), for any \(x_i, x_j \in S\), we have:
\[
(1 - \epsilon) \| x_i - x_j \|^2 \leq \| \Pi x_i - \Pi x_j \|^2 \leq (1 + \epsilon) \| x_i - x_j \|^2.
\]
We denote by \(\mathcal{R}_{\mathbf{\Pi}}\) the set of matrices fulfilling the distance preservation criteria specified in the JL Lemma.
According to the JL Lemma, if \(\Pi\) is sampled with i.i.d. \(\mathcal{N}(0, \frac{1}{d})\) entries, then \(P(\Pi \in \mathcal{R}_{\Pi}) \geq \frac{1}{k}\).
By employing a square root decomposition, we can demonstrate the applicability of the JL lemma to the Mahalanobis distance \cite{rand_proj_mahal}, while
mitigating the singularity issue inherent in inverting the covariance matrix in high dimensions \cite{wainwright_rpt,raptt}.

\subsection{Connecting the Dots}

Given a dataset of multidimensional data vectors, assessing its unimodality becomes intricate.
Random projections offer a solution by maintaining key pairwise distances and performing unimodality assessment in a more Gaussian-like space.
By picking an arbitrary \textit{observer} data point and deriving its distances to all other points, we garner a snapshot of the underlying cluster morphology.
In presence of a single cluster, the distribution of distances is proven to be unimodal.
Notably, the narrative this observer presents is contingent upon its location.
We integrate this idea of random projection and the observer’s perspective into what we term a \textit{view}.
Our proposed algorithm focuses on analyzing these views, pinpointing those views that contradict unimodal narratives, and thus highlight multifaceted cluster formations.
In the rest of the section, we will rigorously define the aforementioned concept.

Given a set $\mathcal{S}_{\mathbf{X}}$ of points from $\mathbf{X} \sim \mathcal{P_{\alpha}}$, let $\mathbf{o} \in \mathcal{S}_{\mathbf{X}}$ be a random point, dubbed observer.
We define the set of Mahalanobis distances with regard to this observer as follows:
\[
\mathcal{D}_{\mathbf{S}}^{o}  = \left \{ ||{\mathbf{\Sigma}}^{-\frac{1}{2}}(\mathbf{x}-\mathbf{o})|| \text{ }| \text{ } \mathbf{x} \in \mathcal{S}_{\mathbf{X}}\setminus\{\mathbf{o}\} \right \}.
\]
Let $\Pi \in \mathcal{R}_{\mathbf{\Pi}}$, we define 
$\Pi \circ \mathcal{D}_{\mathbf{S}}^{o}$ to be the set of the Mahalanobis distances of the randomly projected points with respect to an observer $\mathbf{o}$. Specifically, we have:
\[
\Pi \circ \mathcal{D}_{\mathbf{S}}^{o}  = \left \{ ||{\mathbf{\Sigma}}^{-\frac{1}{2}}_{\Pi}\Pi(\mathbf{x}-\mathbf{o})|| \text{ }| \text{ } \mathbf{x} \in \mathcal{S}_{\mathbf{X}}\setminus\{\mathbf{o}\} \wedge \Pi \in \mathcal{R}_{\mathbf{\Pi}} \right \},
\]
where \(\mathbf{\Sigma_{\Pi}} = \Pi\mathbf{\Sigma}\Pi^T\).
It is important to note that since \( \mathbf{X} \sim \mathcal{P_{\alpha}} \), the elements of both \(\mathcal{D}_{\mathbf{S}}^{o}\) and \( \Pi \circ \mathcal{D}_{\mathbf{S}}^{o} \) exhibit \(\alpha\)-unimodal distributions, as established earlier. This yields the subsequent observation.

\begin{proposition}[Randomisation Hypothesis]
Given \( \mathbf{X} \sim \mathcal{P}_{\alpha} \), the distributions of elements within \(\mathcal{D}_{\mathbf{S}}^{o}\) and \( \Pi \circ \mathcal{D}_{\mathbf{S}}^{o} \) retain \(\alpha\)-unimodality under any transformation \( \Pi \in \mathcal{R}_{\mathbf{\Pi}}\).
\end{proposition}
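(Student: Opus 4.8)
The plan is to obtain the statement as an immediate corollary of the translation, projection, norm, and Mahalanobis lemmas, after conditioning on the observer. First I would fix a value $\mathbf{o}=\mathbf{c}$ of the observer; since the points of $\mathcal{S}_{\mathbf{X}}$ are i.i.d.\ copies of $\mathbf{X}$, conditioning on which point plays the role of the observer leaves every other point distributed as $\mathbf{X}\sim\mathcal{P}_\alpha$. A generic element of $\mathcal{D}_{\mathbf{S}}^{o}$ is then $\|\Sigma^{-1/2}(\mathbf{X}-\mathbf{c})\|$, i.e.\ exactly the Mahalanobis distance from $\mathbf{c}$, which is $\alpha$-unimodal by \cref{mahalanobis} (equivalently, chain \cref{trans_prop} with shift $-\mathbf{c}$, then \cref{proj_prop} with $A=\Sigma^{-1/2}$, then \cref{norm_prop}).

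For $\Pi\circ\mathcal{D}_{\mathbf{S}}^{o}$ I would argue similarly. A generic element is $\|\Sigma_{\Pi}^{-1/2}\Pi(\mathbf{X}-\mathbf{c})\|$ with $\Sigma_{\Pi}=\Pi\Sigma\Pi^{\top}$. Here the restriction $\Pi\in\mathcal{R}_{\mathbf{\Pi}}$ matters only to guarantee that $\Pi$ has full row rank, so that $\Sigma_{\Pi}$ is positive definite and $\Sigma_{\Pi}^{-1/2}$ is well defined --- this is the singularity issue flagged after the JL Lemma. Given that, there are two equivalent routes: either apply \cref{proj_prop} to get $\Pi\mathbf{X}\sim\mathcal{P}_\alpha$ with covariance $\Sigma_{\Pi}$ and then invoke \cref{mahalanobis} for $\Pi\mathbf{X}$ with observer $\Pi\mathbf{c}$; or note that $\Sigma_{\Pi}^{-1/2}\Pi$ is a single linear map $\mathbb{R}^d\to\mathbb{R}^q$ and chain \cref{trans_prop}, \cref{proj_prop}, \cref{norm_prop} directly. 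Either way the element is $\alpha$-unimodal, and since the argument used nothing about $\Pi$ beyond full rank, it holds for every $\Pi\in\mathcal{R}_{\mathbf{\Pi}}$ and every value of $\mathbf{c}$, which is the assertion.

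The proof involves no real computation; the only subtlety I would be careful about is the status of the random observer. Because $\mathbf{o}$ is drawn from $\mathcal{S}_{\mathbf{X}}$ it is not independent of the remaining points, so the clean formulation is the conditional one above, using that each of \cref{trans_prop,proj_prop,norm_prop} holds for an arbitrary fixed translation vector. I would also make explicit that the claim concerns the (conditional) law of an individual element of the two sets, not the empirical distribution of the finite set itself; the latter approaches the former only as $|\mathcal{S}_{\mathbf{X}}|\to\infty$, which is precisely where the convergence $\mathrm{dip}(F_n)\to\mathrm{dip}(F)$ recalled above will be needed when this observation is later combined with the dip test.
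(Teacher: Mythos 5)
Your proof is correct and takes essentially the same route as the paper, which justifies the proposition only by the remark that the elements of both sets are Mahalanobis distances of $\alpha$-unimodal vectors ``as established earlier,'' i.e., exactly the chain of \cref{trans_prop}, \cref{proj_prop}, and \cref{norm_prop} (equivalently \cref{mahalanobis}) applied before and after the projection. Your added care about conditioning on the observer and about the claim concerning the law of an individual element rather than the empirical distribution of the finite set is a welcome clarification that the paper leaves implicit.
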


The Randomisation Hypothesis (RH) is central to our analysis. RH facilitates the execution of a series of one-dimensional unimodality tests, subsequently allowing the evaluation of the \(\alpha\)-unimodality of the distribution that produces our data \(X\).
Under the RH, every randomly projected distances should exhibit unimodality.
Any deviation from this expected behavior can signal a departure from unimodality in the original data distribution.
Random projections confer several distinct merits.
Firstly, they preserve the pairwise distances, as endorsed by the JL lemma.
Secondly, given our observations, random projections serve as an invaluable tool for unimodality investigation.
They transmute the challenge into a space resembling a mixture of Gaussians.
Furthermore, they ameliorate the singularity problem associated with the inversion of the covariance matrix used by the Mahalanobis distance.
Lastly, they pave the way for harnessing Monte Carlo simulation for hypothesis testing \cite{lehmann2005testing}, i.e., they establish the bedrock for sampling from a distribution, specifically \( \mathcal{R}_{\mathbf{\Pi}} \), that is ostensibly simpler than the original data distribution of \(\mathbf{X}\).

\subsection{Multivariate Unimodality Testing}

Building on the aforementioned foundation, we now introduce our multivariate \(\alpha\)-unimodality test called \emph{mud-pod}.
For a given \(\alpha\) and a set of points \(\mathcal{S}_{\mathbf{X}}\) from \(\mathbf{X} \sim \mathcal{P_{\alpha}}\), we define our hypotheses:
\begin{align*}
H_0 &: \mathbf{X} \sim \mathcal{P_{\alpha}} \quad \text{vs.} \quad H_1 : \mathbf{X} \not\sim \mathcal{P_{\alpha}}.
\end{align*}
We define the pairing of a random projection \(\Pi\) with an observer \(\mathbf{o}\) as a random view.
We assume independence between the random projection $\Pi$ and the observer $\mathbf{o}$.
Given a set of $N$ random vectors $\mathcal{S}$ and a random view, we can obtain the corresponding set of Mahalanobis distances \(\Pi \circ \mathcal{D}_{\mathbf{S}}^{o}\).
Under the null hypothesis, recall that \(\Pi \circ \mathcal{D}_{\mathbf{S}}^{o} = \{ d_i \}_{i=1}^{N-1}\) is \(\alpha\)-unimodal, and the set \( \{ d_i^{\alpha} \}_{i=1}^{N-1} \) is unimodal, allowing the employment of the dip test.
Let \(T(\Pi \circ \mathcal{D}_{\mathbf{S}}^{o})\) denote the dip test p-value.
If \(a \in [0,1]\) is the significance level, the null hypothesis is rejected iff \(T(\Pi \circ \mathcal{D}_{\mathbf{S}}^{o}) \leq a\).

Utilizing the idea of random views, which, as previously discussed, preserve \(\alpha\)-unimodality, we can employ them as a foundation for Monte Carlo simulations.
The rationale is that as more views reject the null hypothesis, our confidence about data multimodality increases.
Let \( \{ \Pi_i \}_{i=1}^{M} \subset \mathcal{R}_{\mathbf{\Pi}}\) be a set of $M$ random projections.
Leveraging Monte Carlo hypothesis testing theory (\citealp[Chapter 11.2.2]{lehmann2005testing}) and building on the fact that the dip statistic has a well defined c.d.f. \cite{hartigans}, we explore the conditional c.d.f. of the dip test: \(J_{N}(t) = P\left (T(\Pi \circ \mathcal{D}_{\mathbf{S}}^{o}) \leq t \mid \mathcal{S}_{\mathbf{X}})\right)\).
The aforementioned probability is measured over the joint distribution of random projections \(\mathcal{R}_{\mathbf{\Pi}}\) and the distribution of observers \(\mathcal{O}\).
Let $I\{.\}$ denote the indicator function, we define $\hat{J}_{n, M}(t)$ as the approximation of the true c.d.f $J_{N}(t)$ computed on the series of the random views:
\[
    \hat{J}_{N, M}(t) = M^{-1} \sum_{i=1}^{M} I \left \{T(\Pi \circ \mathcal{D}_{\mathbf{S}}^{o}) \leq t \right \}
\]
By a direct application of the Glivenko-Cantelli theorem, we have that \(\hat{J}_{N, M}(t)\) converges w.p. 1 to \(J_{N}(t)\) \cite{sharipov2011}.
Interestingly, the Dvoretsky, Kiefer, Wolfowitz inequality provides bounds on the closeness between \( \hat{J}_{N, M}(t) \) and \( J_{N}(t) \) for a given \( M \).
Specifically, we have:
\[
P \left ( \sup_{t\in \mathbb{R}} | \hat{J}_{N, M}(t) - J_{N}(t) | > \tau \right ) \leq C e^{-2M\tau^2}
\]
Hitherto, we have not delineated the methodology for selecting observers from the set of points obtained from a random projection. Several sampling strategies can exist, with the most intuitive being uniform random sampling.
However, empirical results suggest that uniformly selecting observers based on a specific percentile of the distance distribution from the samples' mean yields superior performance.
The underlying rationale is that points situated farther away from the means possess a better capability to discern the topographical elevations formed by distinct clusters \cite{dip-means}.
It is important to note that despite the dependency introduced between the observer $\mathbf{o}$ and the random projection $\Pi$ by the \textit{percentile strategy}, our analysis remains valid thanks to the extension of the Glivenko-Cantelli theorem to strictly stationary sequences \cite{sharipov2011}.
Ultimately, the projection dimension is the minimal integer that satisfies the JL lemma for a specified \(\epsilon\).
Algorithm~\ref{alg:mud-pod} details the complete mud-pod test.
\begin{algorithm}
\caption{mud-pod ($\alpha$, $X$, $a$, $M$, $p$, $\epsilon$)}
\textbf{Input:} $\alpha$ (the positive unimodality index), $X$ (a set of real vectors), $a$ (a significance threshold), $M$ (number of simulations), $p$ (p-th percentile), $\epsilon$ (distance distortion)

\textbf{Output:} p-value of the test

\begin{algorithmic}[1]
\FOR{$i = 1$ to $M$}
    \STATE Project the points via a \( \left\lceil \frac{8 \log(|X|)}{\epsilon^2} \right\rceil \) random projection, resulting in $X\Pi_i$.
    \STATE Select an observer $\textbf{o}$ from the p-th percentile of the projected Mahalanobis distances from the mean.
    \STATE Compute the set of distances from $o$, i.e., $\Pi \circ \mathcal{D}_{\mathbf{S}}^{o}$.
    \STATE Conduct a dip test on exponentiated $\Pi \circ \mathcal{D}_{\mathbf{S}}^{o}$ distances.
\ENDFOR
\STATE \textbf{return}  \( p_{\text{value}} = \frac{1}{M} \sum_{i=1}^{M} I\{T(\Pi_i \circ \mathcal{D}_{\mathbf{S}}^{o}) \leq a\} \)
\end{algorithmic}
\label{alg:mud-pod}
\end{algorithm}

\section{Experiments}
\label{exps}

In this section, we present a comprehensive suite of experiments conducted for both multivariate unimodality testing and estimating cluster counts in clustering tasks.
Our decision to assess our algorithm for cluster estimation is driven by its complexity and wide practical relevance \cite{stop_using_elbow}.
A pertinent query pertains to which \(\alpha\)-unimodality family we aim to detect.
Despite its rigor, we opted to assess 1-unimodality regardless of the underlying data space.
Empirical results indicated its efficacy even on challenging real-world datasets.
Our algorithm is characterized by three parameters: $M, \epsilon, p$.
Following an initial exploration, we identified parameter values that consistently produced favorable outcomes.
Specifically, we set \( M = 100 \), \( \epsilon = 0.99 \), \( p = 0.99 \), and chose a significance level \( a = 0.01 \).

\subsection{Unimodality Experiments}

\renewcommand{\arraystretch}{1.2}
\begin{table*}[htb]
    \centering
    \caption{Performance comparison of dip-dist (DD), mudpod (MP), and folding (F) tests in determining unimodality or multimodality. The table displays the \emph{percentage of multimodality} cases identified over 10 runs. 1000 points randomly sampled from synthetic sets and MNIST training set per experiment. For space constraints, Single MNIST experiment results are compressed, with digits divided by semi-colons summarizing the outcomes across all tests. \( G_{n} \) denotes a \(n\)D Gaussian distribution. \( C(r) \) symbolizes the 2D equation of a circle with radius \( r \). Utilizing parametric equations \( U(\theta) = \left( \cos(\theta), \sin(\theta) \right) \) and \( L(\theta) = \left( 1 - \cos(\theta), 1 - \sin(\theta) - 0.5 \right) \) with \( \theta \in [0, \pi] \).
    \vspace{0.1in}
}
    \begin{tabular}{c|c|c|c|c}
    \hline
    Experiment & Distribution Details & DT & \textbf{MP} & F \\
    \hline
     Single 2D Gaussian & \(G\left([0, 0], \mathrm{I}\right)\) & 0 & 0 & 0 \\
     Single 3D Gaussian & \(G\left([0, 0, 0], \mathrm{I}\right)\) & 0 & 0 & 0 \\
      Two 2D Circles & \( \frac{1}{2} \left( C(0.5) + \mathcal{N}(0, 0.05^2I) \right) + \frac{1}{2} \left( C(1) + \mathcal{N}(0, 0.05^2I) \right) \) & 100 & 100 & 100 \\
     Two 2D Moons & \( \frac{1}{2} \left( U(\theta) + \mathcal{N}(0, 0.05^2I) \right) + \frac{1}{2} \left( L(\theta) + \mathcal{N}(0, 0.05^2I) \right)
 \) & 100 & 100 & 0 \\
     Two 2D Gaussians & \(0.5 \cdot G_1\left([1, 4], \mathrm{I}\right) + 0.5 \cdot G_2\left([2, 1], \mathrm{I}\right)
\) & 100 & 100 & 0 \\
     Three 2D Gaussians & \(\frac{1}{3} \cdot G_1\left([t, t],  \mathrm{I}\right) + \frac{1}{3} \cdot G_2\left([0, 0],  \mathrm{I}\right) + \frac{1}{3} \cdot G_3\left(-[t, t],  \mathrm{I}\right) | \text{ } t=2.5
\) & 50 & 100 & 0 \\
     Two 3D Gaussians & \(0.5 \cdot G_1\left([1, 4, 2], \mathrm{I}\right) + 0.5 \cdot G_2\left([1, -2, 3], \mathrm{I}\right)
\) & 100 & 100 & 0 \\
     Three 3D Gaussians & \( \frac{1}{3} \cdot G_1\left([t, t, t], \mathrm{I}\right) + \frac{1}{3} \cdot G_2\left([0, 0, 0], \mathrm{I}\right) + \frac{1}{3} \cdot G_3\left(-[t, t, t],\mathrm{I}\right) | \text{ } t=2.9\) & 10 & 100 & 0 \\
    \hline
     Single Digit MNIST & 0; 2; 3; 4; 7; 8 & 0 & 0 &  100 \\
     Single Digit MNIST & 1 & 100 & 100 &  100 \\
     Single Digit MNIST & 5 & 0 & 10 &  100 \\
     Single Digit MNIST & 6 & 10 & 10 &  100 \\
     Single Digit MNIST & 9 & 10 & 20 &  100 \\
     Even Digits MMNIST & \(\{0, 2, 4, 6, 8\}\) & 10 & 90 & 100 \\
     Odd Digits MMNIST & \(\{1, 3, 5, 7, 9\}\) & 80 & 100 & 100 \\
     All Digits MMNIST & \(\{0, 1, 2, 3, 4, 5, 6, 7, 8, 9\}\) & 0 & 100 & 100 \\
    \hline
    \end{tabular}
    \label{tab:unim_exps}
\end{table*}
\renewcommand{\arraystretch}{1}
Table~\ref{tab:unim_exps} presents an intricate assessment of the capability of three distinct tests — dip-dist (DD), mudpod (MP), and folding (F) — in discerning unimodal and multimodal datasets.
DD and F tests are non-parametric and we also set $a=0.01$.
It is important to note that DD is a special case of mud-pod, omitting \( \alpha \) exponent, operating in the original space using the Euclidean distance and considering all data points as observers.
The evaluation was carried out over ten distinct runs for each test on a combination of both synthetic and real-world data drawn from the MNIST dataset \cite{lecun1998mnist}.
Starting with synthetic unimodal datasets, namely the single 2D and 3D Gaussian distributions, we find a unanimous agreement across the three tests, with none indicating any instances of multimodality.
Examining the synthetic bimodal distributions, 2D Moons and Circles show clear multimodality, confirmed by DD and MP tests with 100\% detection.
DD and MP also report 100\% detection for bimodal Gaussians in 2D and 3D.
However, DD's performance declines with three closely aligned Gaussians in both 2D and 3D, unlike MP's consistent 100\% detection.
The F test only identifies multimodality in the 2D Circles dataset.

Transitioning to real-world datasets, such as MNIST, offers a more intricate perspective.
For our tests, we utilized the flattened MNIST representations without any transformations.
It is noteworthy that, while MNIST has labels, their alignment with clustering in the original space is not guaranteed.
For instance, consider the digit ``1'', representable as a single stroke or a combination of two distinct ones.
It can be observed that for digits 0, 2, 3, 4, 7, and 8, the DD and MP tests consistently reject multimodality, while the F test falsely decides multimodality in all cases.
Notably, all tests unanimously flag digit 1 as multimodal.
The digits 5, 6, and 9 reveal varied outcomes, with the F test persistently recognizing multimodality.
The conduction of three additional tests to assess multimodality of even, odd, and all MNIST subsets in multi-digit scenarios reveals a decline in DD's efficacy.
In summary, we observe a pronounced tendency of the F test to detect multimodality across various datasets, with DD and MP performance being comparable in simpler scenarios, but DD falling short in multi-digit scenarios.

\begin{table}[t]
\centering
    \caption{Impact of space, distance, and observer selection strategy on mudpod's unimodality detection performance. Mudpod's result agreement is shown for a mixture of two 2D Gaussians with confirmed ground truth unimodality. \emph{Notation:} `O' for Original, `RP' for Randomly Projected, `E' for Euclidean, `M' for Mahalanobis; `R' and `P' denote Random and Percentile, respectively.
    \vspace{0.1in}
    }
    \begin{tabular}{c|c|c|c}
    \hline
    Space & Distance     & Observer  & Agreement (\%) \\
    \hline
    O     & E    & R    & 0.80 \\
    O     & E    & P & 0.87 \\
    O     & M  & R    & 0.82 \\
    O     & M  & P & 0.87 \\
    RP    & E    & R    & 0.85 \\
    RP    & E    & P & 0.90 \\
    RP    & M  & R    & 0.92 \\
    RP    & M  & P & 0.95 \\
    \hline
    \end{tabular}
    \label{tab:ablation}
\end{table}
\textbf{Ablation Study:} Our test comprises several components, including random projections, Mahalanobis distance, and uniform sampling of distances from origin percentiles.
In Table~\ref{tab:ablation}, we present an ablation study to evaluate the significance of each component.
We have generated datasets from a mixture of two 2D Gaussians as the target distribution, for which the bimodality or unimodality can be determined analytically \cite{konstantellos}.
The reported performance is aggregated from a series of experiments conducted with four different significance levels \(0.001, 0.005, 0.01, 0.05\), across 1000 distinct data sets, with each experiment executed 10 times.
Our ablation study reveals key insights into the performance of different observer picking strategies, spaces, and distances for unimodality detection.
Primarily, the percentile strategy (P) for observer picking demonstrated superiority over the random strategy (R) across all tested combinations of space and distance.
Furthermore, the Mahalanobis distance consistently emerged as a more effective metric compared to the Euclidean distance.
This performance difference was especially evident in the Randomly Projected space, suggesting that the intrinsic characteristics of the Mahalanobis distance, e.g., accounting for data covariance, plays a pivotal role in enhancing detection reliability.
Notably, the randomly projected (RP) space exhibited a pronounced advantage over the original space in our assessments.
This superiority held true irrespective of the distance metric or observer strategy employed.
Such a trend strongly indicates that the RP space aligns more coherently with the ground truth unimodality, offering better detection capabilities compared to the original space.
In summary, our findings recommend a strategic combination of the P strategy, Mahalanobis distance, and the RP space.

\begin{table*}[t]
\centering
\caption{The table presents the number of clusters ( K ) and the associated NMI values obtained by various methods on different datasets. Values marked with † could not be computed due to memory constraints or were terminated after 8 hours. All results are represented as mean ± standard deviation over 10 executions. For the k-means algorithm, the correct number of clusters was always predefined.
\vspace{0.1in}
}
\begin{adjustbox}{center}
\begin{tabular}{l|cc|cc|cc|cc}
\toprule
\multirow{2}{*}{Method} & \multicolumn{2}{c|}{USPS} & \multicolumn{2}{c|}{MNIST} & \multicolumn{2}{c|}{F-MNIST} & \multicolumn{2}{c}{HAR}  \\
\cmidrule(lr){2-9}
                        & k & NMI & k & NMI & k & NMI & k & NMI  \\
\midrule
Ground truth & 10 & 1.0 & 10 & 1.0 & 10 & 1.0 & 5 & 1.0 \\
\midrule
k-means & - & 0.61$\pm$0.00 & - & 0.49$\pm$0.00 & - & 0.51$\pm$0.00 & - & 0.61$\pm$0.01 \\
x-means & 35$\pm$0 & 0.61$\pm$0.01 & 35$\pm$0 & 0.55$\pm$0.00 & 35$\pm$0 & 0.51$\pm$0.00 & 41$\pm$2 & 0.56$\pm$0.01 \\
g-means & 35$\pm$0 & 0.61$\pm$0.00 & 35$\pm$0 & 0.55$\pm$0.00 & 35$\pm$0 & 0.51$\pm$0.00 & 931$\pm$29 & 0.42$\pm$0.00 \\
pg-means & 2$\pm$1 & 0.14$\pm$0.07 & 2$\pm$1 & 0.18$\pm$0.09 & 4$\pm$2 & 0.31$\pm$0.11 & 2$\pm$1 & 0.14$\pm$0.05 \\
dip-means & 4$\pm$0 & 0.44$\pm$0.00 & 1$\pm$0 & 0.01$\pm$0.05 & 9$\pm$2 & 0.50$\pm$0.01 & 3$\pm$0 & 0.73$\pm$0.00 \\
hdbscan  & 13$\pm$0 & 0.38$\pm$0.00 & 36$\pm$0 & 0.33$\pm$0.00 & 3$\pm$0 & 0.05$\pm$0.00 & 3$\pm$0 & 0.52$\pm$0.00 \\
fold-means & 31$\pm$0 & 0.59$\pm$0.01 & 31$\pm$0 & 0.55$\pm$0.01 & 31$\pm$0 & 0.54$\pm$0.01 & 11$\pm$0 & 0.61$\pm$0.00 \\
\textbf{mp-means} & 8$\pm$2 & 0.62$\pm$0.03 & 9$\pm$2 & 0.55$\pm$0.04 & 9$\pm$1 & 0.54$\pm$0.01 & 3$\pm$0 & 0.73$\pm$0.00 \\
\bottomrule
\toprule
\multirow{2}{*}{Method} &  \multicolumn{2}{c|}{Optdigits} & \multicolumn{2}{c|}{Pendigits} & \multicolumn{2}{c|}{Isolet} & \multicolumn{2}{c}{TCGA} \\
\cmidrule(lr){2-9} 
                        & k & NMI & k & NMI & k & NMI & k & NMI \\
\midrule
Ground truth  & 10 & 1.0 & 10 & 1.0 & 26 & 1.0 & 5 & 1.0 \\
\midrule
k-means & - & 0.69$\pm$0.01 & - & 0.69$\pm$0.01 & - & 0.73$\pm$0.01 & - & 0.80$\pm$0.01 \\
x-means & 35$\pm$0 & 0.71$\pm$0.01 & 35$\pm$0 & 0.70$\pm$0.01 & 233$\pm$4 & 0.66$\pm$0.01 & 20$\pm$1 & 0.68$\pm$0.01 \\
g-means & 35$\pm$0 & 0.72$\pm$0.01 & 35$\pm$0 & 0.70$\pm$0.01 & 101$\pm$6 & 0.69$\pm$0.01 & 283$\pm$48 & 0.49$\pm$0.04 \\
pg-means & 1$\pm$0 & 0.02$\pm$0.07 & 3$\pm$1 & 0.34$\pm$0.18 & † & † & 1$\pm$0 & 0.02$\pm$0.04 \\
dip-means & 1$\pm$0 & 0.00$\pm$0.00 & 16$\pm$1 & 0.71$\pm$0.02 & 4$\pm$0 & 0.44$\pm$0.01 & 2$\pm$0 & 0.50$\pm$0.01 \\
hdbscan  & 21$\pm$0 & 0.71$\pm$0.00 & 38$\pm$0 & 0.72$\pm$0.00 & 4$\pm$0 & 0.04$\pm$0.00 & 7$\pm$0 & 0.75$\pm$0.00 \\
fold-means & 4$\pm$1 & 0.45$\pm$0.11 & 1$\pm$0 & 0.00$\pm$0.00 & 1$\pm$0 & 0.00$\pm$0.00 & † & † \\
\textbf{mp-means} & 8$\pm$1 & 0.67$\pm$0.06 & 14$\pm$1 & 0.70$\pm$0.01 & 20$\pm$7 & 0.63$\pm$0.14 & 6$\pm$1 & 0.95$\pm$0.03 \\
\bottomrule
\end{tabular}
\end{adjustbox}
\label{tab:cluster_exps}
\end{table*}

\begin{figure*}[h]
    \centering
        \includegraphics[width=\textwidth]{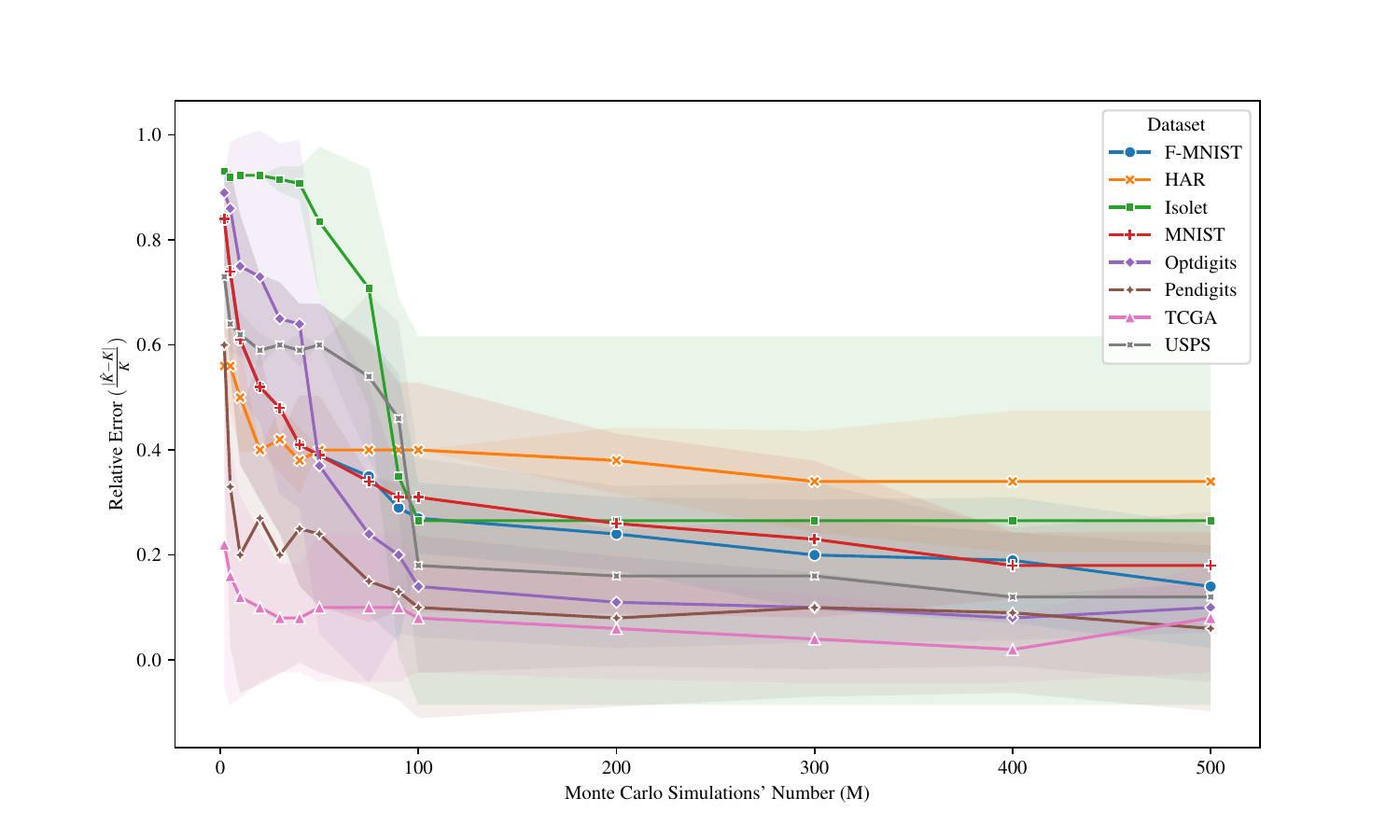}
    \caption{
        The plot shows the relative error between the estimated and actual number of clusters generated by mp-means, against increasing Monte Carlo simulations. Ten executions per experiment, variance depicted. The plot is more discernible in color.
    }
    \label{fig:relative_k_diagram}
\end{figure*}

\subsection{Clustering Experiments}
\label{res:clustering}

This section analyzes the performance of \textit{mp-means}, an approach that incorporates mud-pod into the dip-means wrapper method, replacing dip-dist.
Specifically, both mp-means and dip-means employ incremental $k$-means clustering based on testing clusters for unimodality.
Starting with one cluster (the entire dataset), they incrementally increase $k$ by splitting multimodal clusters, terminating when all clusters are deemed unimodal.
They differ in the cluster unimodality assessment.
Dip-means uses the dip-test criterion, while mp-means uses our proposed unimodality test.
Upon detecting a multimodal cluster, they split the one with the highest dip value.
It’s split into two using the 2-means algorithm or assigning clusters at mean $\pm$ standard deviation, using the cluster's mean and standard deviation.
In our work, we opt for the latter for computational efficiency.
In this way, the number of clusters is increased to $k+1$ and the $k+1$ centers are updated via $k$-means.
A similar integration of the folding test yields the \textit{fold-means} algorithm.

In Table~\ref{tab:cluster_exps}, we compare the performance of various clustering methods estimating the number of clusters across several datasets.
It is important to note that in all our experiments, we use the raw flattened data encoding and apply only a feature-wise z-transformation.
Although our method employs the Mahalanobis distance and is scale-agnostic, we observed that scaling profoundly affects other algorithms. 
Our experiments include datasets like USPS, MNIST, Fashion-MNIST (F-MNIST), Human Activity Recognition (HAR), Optdigits, Pendigits, Isolet, and TCGA.
Detailed dataset descriptions are provided in Appendix~\ref{appendix:clust_datasets}. As ground truth number of clusters, the number of classes was considered in all datasets.
We benchmark against classic algorithms, including x-means \cite{x-means}, g-means \cite{x-means}, pg-means \cite{pg-means}, dip-means \cite{dip-means}, and hdbscan \cite{hdbscan}.
A detailed overview of the baseline algorithms and their hyperparameter setups are provided in Appendix~\ref{appendix:clust_algos}.
Our evaluation metrics consist of the estimated number of clusters (k) and the Normalized Mutual Information (NMI).
NMI values lie between [0, 1], where 1 signifies a perfect match and 0 represents an arbitrary result.

This experiment set, encompassing both the complexity of many modes and inherent clustering errors, is notably more challenging than unimodality testing.
We observe that most of the classical approaches, notably x-means and g-means, tend to predict a high value for \( k \), frequently estimating around 35 clusters for datasets such as USPS, MNIST, and F-MNIST.
Similarly, x-means, consistently overestimates the \( k \) value on these datasets, accompanied by NMI values not consistently reaching the top tier.
In contrast, pg-means often significantly underestimates the number of clusters, suggesting a notable diversion from the ground truth.
This assertion is further supported by its suboptimal NMI values across multiple datasets, with values as low as 0.14 for the USPS dataset.
Hdbscan shows a varying range in \( k \) values across datasets, highlighting its adaptability.
However, this variability does not always correlate with high NMI values.
Its performance on the USPS dataset, where it estimates 13 clusters with an NMI of 0.38, is a case in point.

Turning our attention to the HAR dataset, the dip-means and mp-means methods are particularly notable.
They estimate \( k \) values closely aligned with the ground truth of 5 and simultaneously achieve the highest NMI scores, specifically 0.73.
This is indicative of their capability in deciphering the cluster structure inherent to the data. 
In the context of the Optdigits and Pendigits datasets, dip-means struggles to align its \( k \) estimations with the ground truth, predicting values of 1 and 16, respectively.
Notably, mp-means appears more consistent, with \( k \) values of 8 and 14 for Optdigits and Pendigits, respectively, and corresponding NMI scores that are commendable.
In an experiment with k-means using accurate cluster count, the NMI of mp-means closely aligns with, or even exceeds, that of k-means.
Intriguingly, the projection dimension of mp-means varies between subclusters.
The experiments also underscore its robustness across multiple projection spaces.
In conclusion, the consistent performance of mp-means underscores its utility in solving clustering problems with unknown number of clusters.

\textbf{Ablation Study on Simulations' Number:} In Figure~\ref{fig:relative_k_diagram}, the relative error between the estimated number of clusters \( K \) and the true value, as deduced by the mp-means algorithm, is depicted against the number of Monte Carlo simulations (\(M\)).
The shaded regions around each line represent the variance over 10 executions, highlighting result consistency.
As observed, for all datasets, an increase in the number of Monte Carlo simulations tends to correspond with a decline in the relative error, signifying an enhancement in the accuracy of \( k \) estimation.
While some datasets exhibit minimal spread, others display significant variance, suggesting the need for more Monte Carlo simulations.
This suggests a need for more projections on certain datasets.
However, this variance does not significantly impede \( k \) estimation.
Summarizing, Figure~\ref{fig:relative_k_diagram} substantiates the notion that increasing the number of Monte Carlo simulations augments the precision of the mp-means algorithm's k estimation, albeit with varying degrees of improvement across different datasets.
Interestingly, we note that convergence occurs at around 100 simulations, which is significant for practical use.

\section{Conclusions}
\label{conclusions}

In this paper, we addressed the challenges associated with generalizing unimodality testing to higher dimensions.
Building upon the notion of $\alpha$-unimodality, we presented a novel methodology that provides a robust approach to multivariate unimodality testing.
Utilizing point-to-point distancing and linear random projections, our approach bridges the gap between the simplicity of one-dimensional unimodality confirmation and the intricacies of its higher-dimensional counterpart.
By integrating our unimodality test with k-means, we introduced a new incremental clustering approach that automatically estimates the cluster count while performing clustering.
Empirical evaluations affirm our test's efficacy in unimodality assessments and clustering scenarios, highlighting its vital applicability across diverse data-driven domains.
Future exploration will focus on identifying additional \( \alpha \)-unimodality preserving operations to enhance our test, e.g, exploiting JL lemma's relaxations through matrix sketching, experimenting with \( \alpha \)-unimodality for \( \alpha \neq 1 \), and applying the test to various data analysis tasks, such as covariate shift and time series change detection.

\bibliography{main}

\begin{thebibliography}{34}
\providecommand{\natexlab}[1]{#1}
\providecommand{\url}[1]{\texttt{#1}}
\expandafter\ifx\csname urlstyle\endcsname\relax
  \providecommand{\doi}[1]{doi: #1}\else
  \providecommand{\doi}{doi: \begingroup \urlstyle{rm}\Url}\fi

\bibitem[Bhattacharya et~al.(2009)Bhattacharya, Kar, and Pal]{rand_proj_mahal}
Bhattacharya, A., Kar, P., and Pal, M.
\newblock On low distortion embeddings of statistical distance measures into
  low dimensional spaces.
\newblock In Bhowmick, S.~S., K{\"u}ng, J., and Wagner, R. (eds.),
  \emph{Database and Expert Systems Applications}, pp.\  164--172, Berlin,
  Heidelberg, 2009. Springer Berlin Heidelberg.
\newblock ISBN 978-3-642-03573-9.

\bibitem[Campello et~al.(2013)Campello, Moulavi, and Sander]{hdbscan}
Campello, R. J. G.~B., Moulavi, D., and Sander, J.
\newblock Density-based clustering based on hierarchical density estimates.
\newblock In Pei, J., Tseng, V.~S., Cao, L., Motoda, H., and Xu, G. (eds.),
  \emph{Advances in Knowledge Discovery and Data Mining}, pp.\  160--172,
  Berlin, Heidelberg, 2013. Springer Berlin Heidelberg.
\newblock ISBN 978-3-642-37456-2.

\bibitem[Chasani \& Likas(2022)Chasani and Likas]{chasani2022}
Chasani, P. and Likas, A.
\newblock The uu-test for statistical modeling of unimodal data.
\newblock \emph{Pattern Recognition}, 122:\penalty0 108272, 2022.
\newblock ISSN 0031-3203.
\newblock \doi{https://doi.org/10.1016/j.patcog.2021.108272}.
\newblock URL
  \url{https://www.sciencedirect.com/science/article/pii/S0031320321004520}.

\bibitem[Dai(1989)]{Dai_1989}
Dai, T.
\newblock On multivariate unimodal distributions, 1989.
\newblock URL
  \url{https://open.library.ubc.ca/collections/ubctheses/831/items/1.0097413}.

\bibitem[Dasgupta(1999)]{Dasgupta1999LearningMO}
Dasgupta, S.
\newblock Learning mixtures of gaussians.
\newblock \emph{40th Annual Symposium on Foundations of Computer Science (Cat.
  No.99CB37039)}, pp.\  634--644, 1999.
\newblock URL \url{https://api.semanticscholar.org/CorpusID:8338511}.

\bibitem[Daskalakis et~al.(2013)Daskalakis, Diakonikolas, Servedio, Valiant,
  and Valiant]{daskalakis_2013}
Daskalakis, C., Diakonikolas, I., Servedio, R.~A., Valiant, G., and Valiant, P.
\newblock \emph{Testing k-Modal Distributions: Optimal Algorithms via
  Reductions}, pp.\  1833--1852.
\newblock 2013.
\newblock \doi{10.1137/1.9781611973105.131}.
\newblock URL \url{https://epubs.siam.org/doi/abs/10.1137/1.9781611973105.131}.

\bibitem[Daskalakis et~al.(2014)Daskalakis, Diakonikolas, and
  Servedio]{daskalakis_2014}
Daskalakis, C., Diakonikolas, I., and Servedio, R.~A.
\newblock Learning $k$-modal distributions via testing.
\newblock \emph{Theory of Computing}, 10\penalty0 (20):\penalty0 535--570,
  2014.
\newblock \doi{10.4086/toc.2014.v010a020}.
\newblock URL \url{https://theoryofcomputing.org/articles/v010a020}.

\bibitem[Dharmadhikari \& Joag-Dev(1988)Dharmadhikari and
  Joag-Dev]{dharmadhikari1988unimodality}
Dharmadhikari, S. and Joag-Dev, K.
\newblock \emph{Unimodality, convexity, and applications}.
\newblock Elsevier, 1988.

\bibitem[Diaconis \& Freedman(1984)Diaconis and Freedman]{diaconis}
Diaconis, P. and Freedman, D.
\newblock {Asymptotics of Graphical Projection Pursuit}.
\newblock \emph{The Annals of Statistics}, 12\penalty0 (3):\penalty0 793 --
  815, 1984.
\newblock \doi{10.1214/aos/1176346703}.
\newblock URL \url{https://doi.org/10.1214/aos/1176346703}.

\bibitem[D{\"u}mbgen et~al.(2013)D{\"u}mbgen, Conte-Zerial,
  et~al.]{dumbgen2013low}
D{\"u}mbgen, L., Conte-Zerial, D., et~al.
\newblock On low-dimensional projections of high-dimensional distributions.
\newblock In \emph{From Probability to Statistics and Back: High-Dimensional
  Models and Processes--A Festschrift in Honor of Jon A. Wellner}, volume~9,
  pp.\  91--105. Institute of Mathematical Statistics, 2013.

\bibitem[Dunn et~al.(2021)Dunn, Gangrade, Wasserman, and
  Ramdas]{log-concavity-test}
Dunn, R., Gangrade, A., Wasserman, L., and Ramdas, A.
\newblock Universal inference meets random projections: a scalable test for
  log-concavity.
\newblock \emph{arXiv preprint arXiv:2111.09254}, 2021.

\bibitem[Feng \& Hamerly(2006)Feng and Hamerly]{pg-means}
Feng, Y. and Hamerly, G.
\newblock Pg-means: learning the number of clusters in data.
\newblock In Sch\"{o}lkopf, B., Platt, J., and Hoffman, T. (eds.),
  \emph{Advances in Neural Information Processing Systems}, volume~19. MIT
  Press, 2006.
\newblock URL
  \url{https://proceedings.neurips.cc/paper_files/paper/2006/file/a9986cb066812f440bc2bb6e3c13696c-Paper.pdf}.

\bibitem[Fernandez-Granda(2016)]{jl_granda}
Fernandez-Granda, C.
\newblock Random projections and compressed sensing.
\newblock Lecture Notes in Optimization-based Data Analysis, 2016.
\newblock URL
  \url{https://cims.nyu.edu/~cfgranda/pages/OBDA_spring16/material/random_projections.pdf}.
\newblock New York University.

\bibitem[Hartigan \& Hartigan(1985)Hartigan and Hartigan]{hartigans}
Hartigan, J.~A. and Hartigan, P.~M.
\newblock {The Dip Test of Unimodality}.
\newblock \emph{The Annals of Statistics}, 13\penalty0 (1):\penalty0 70 -- 84,
  1985.
\newblock \doi{10.1214/aos/1176346577}.
\newblock URL \url{https://doi.org/10.1214/aos/1176346577}.

\bibitem[Hotelling(1931)]{t-squared}
Hotelling, H.
\newblock {The Generalization of Student's Ratio}.
\newblock \emph{The Annals of Mathematical Statistics}, 2\penalty0
  (3):\penalty0 360 -- 378, 1931.
\newblock \doi{10.1214/aoms/1177732979}.
\newblock URL \url{https://doi.org/10.1214/aoms/1177732979}.

\bibitem[Hull(1994)]{hull1994database}
Hull, J.~J.
\newblock Database for handwritten text recognition research.
\newblock In \emph{IEEE Transactions on pattern analysis and machine
  intelligence}, volume~16, pp.\  550--554. IEEE, 1994.

\bibitem[Johnson \& Lindenstraus(1984)Johnson and
  Lindenstraus]{Johnson1984ExtensionsOL}
Johnson, W.~B. and Lindenstraus, J.
\newblock Extensions of lipschitz mappings into hilbert space.
\newblock \emph{Contemporary mathematics}, 26:\penalty0 189--206, 1984.

\bibitem[Kalogeratos \& Likas(2012)Kalogeratos and Likas]{dip-means}
Kalogeratos, A. and Likas, A.
\newblock Dip-means: an incremental clustering method for estimating the number
  of clusters.
\newblock In Pereira, F., Burges, C., Bottou, L., and Weinberger, K. (eds.),
  \emph{Advances in Neural Information Processing Systems}, volume~25. Curran
  Associates, Inc., 2012.
\newblock URL
  \url{https://proceedings.neurips.cc/paper_files/paper/2012/file/a8240cb8235e9c493a0c30607586166c-Paper.pdf}.

\bibitem[Kelly et~al.(2023)Kelly, Longjohn, and Nottingham]{uci}
Kelly, M., Longjohn, R., and Nottingham, K.
\newblock The uci machine learning repository, 2023.
\newblock URL \url{https://archive.ics.uci.edu/ml}.
\newblock Last accessed: October 31, 2023.

\bibitem[Khintchine(1938)]{khintchine1938unimodal}
Khintchine, A.~Y.
\newblock On unimodal distributions.
\newblock \emph{Izv. Nsuchno-Issled. Inst. Mat. Meh. Tomsk. Goa. Univ.},
  2:\penalty0 1--7, 1938.
\newblock in Russian.

\bibitem[Konstantellos(1980)]{konstantellos}
Konstantellos, A.
\newblock Unimodality conditions for gaussian sums.
\newblock \emph{IEEE Transactions on Automatic Control}, 25\penalty0
  (4):\penalty0 838--839, 1980.
\newblock \doi{10.1109/TAC.1980.1102410}.

\bibitem[LeCun et~al.(1998)LeCun, Bottou, Bengio, and Haffner]{lecun1998mnist}
LeCun, Y., Bottou, L., Bengio, Y., and Haffner, P.
\newblock Gradient-based learning applied to document recognition.
\newblock \emph{Proceedings of the IEEE}, 86\penalty0 (11):\penalty0
  2278--2324, 1998.

\bibitem[Lehmann \& Romano(2005)Lehmann and Romano]{lehmann2005testing}
Lehmann, E.~L. and Romano, J.~P.
\newblock \emph{Testing statistical hypotheses}.
\newblock Springer Texts in Statistics. Springer, New York, third edition,
  2005.
\newblock ISBN 0-387-98864-5.

\bibitem[Lopes et~al.(2011)Lopes, Jacob, and Wainwright]{wainwright_rpt}
Lopes, M., Jacob, L., and Wainwright, M.~J.
\newblock A more powerful two-sample test in high dimensions using random
  projection.
\newblock In Shawe-Taylor, J., Zemel, R., Bartlett, P., Pereira, F., and
  Weinberger, K. (eds.), \emph{Advances in Neural Information Processing
  Systems}, volume~24. Curran Associates, Inc., 2011.
\newblock URL
  \url{https://proceedings.neurips.cc/paper_files/paper/2011/file/5487315b1286f907165907aa8fc96619-Paper.pdf}.

\bibitem[McInnes et~al.(2017)McInnes, Healy, and
  Astels]{hdbscan_implementation}
McInnes, L., Healy, J., and Astels, S.
\newblock hdbscan: Hierarchical density based clustering.
\newblock \emph{The Journal of Open Source Software}, 2\penalty0 (11):\penalty0
  205, 2017.

\bibitem[Novikov(2019)]{pyclustering}
Novikov, A.
\newblock {PyClustering}: Data mining library.
\newblock \emph{Journal of Open Source Software}, 4\penalty0 (36):\penalty0
  1230, apr 2019.
\newblock \doi{10.21105/joss.01230}.
\newblock URL \url{https://doi.org/10.21105/joss.01230}.

\bibitem[Olshen \& Savage(1970)Olshen and Savage]{alphaunim}
Olshen, R.~A. and Savage, L.~J.
\newblock A generalized unimodality.
\newblock \emph{Journal of Applied Probability}, 7\penalty0 (1):\penalty0
  21--34, 1970.
\newblock ISSN 00219002.
\newblock URL \url{http://www.jstor.org/stable/3212145}.

\bibitem[Pelleg \& Moore(2000)Pelleg and Moore]{x-means}
Pelleg, D. and Moore, A.~W.
\newblock X-means: Extending k-means with efficient estimation of the number of
  clusters.
\newblock In \emph{Proceedings of the Seventeenth International Conference on
  Machine Learning}, ICML '00, pp.\  727–734, San Francisco, CA, USA, 2000.
  Morgan Kaufmann Publishers Inc.
\newblock ISBN 1558607072.

\bibitem[Radhendushka~Srivastava \& Ruppert(2016)Radhendushka~Srivastava and
  Ruppert]{raptt}
Radhendushka~Srivastava, P.~L. and Ruppert, D.
\newblock Raptt: An exact two-sample test in high dimensions using random
  projections.
\newblock \emph{Journal of Computational and Graphical Statistics}, 25\penalty0
  (3):\penalty0 954--970, 2016.
\newblock \doi{10.1080/10618600.2015.1062771}.
\newblock URL \url{https://doi.org/10.1080/10618600.2015.1062771}.

\bibitem[Schubert(2023)]{stop_using_elbow}
Schubert, E.
\newblock Stop using the elbow criterion for k-means and how to choose the
  number of clusters instead.
\newblock \emph{SIGKDD Explor. Newsl.}, 25\penalty0 (1):\penalty0 36–42, jul
  2023.
\newblock ISSN 1931-0145.
\newblock \doi{10.1145/3606274.3606278}.
\newblock URL \url{https://doi.org/10.1145/3606274.3606278}.

\bibitem[Sharipov(2011)]{sharipov2011}
Sharipov, O.~S.
\newblock \emph{Glivenko-Cantelli Theorems}, pp.\  612--614.
\newblock Springer Berlin Heidelberg, Berlin, Heidelberg, 2011.
\newblock ISBN 978-3-642-04898-2.
\newblock \doi{10.1007/978-3-642-04898-2_280}.
\newblock URL \url{https://doi.org/10.1007/978-3-642-04898-2_280}.

\bibitem[Siffer et~al.(2018)Siffer, Fouque, Termier, and
  Largou\"{e}t]{folding-test}
Siffer, A., Fouque, P.-A., Termier, A., and Largou\"{e}t, C.
\newblock Are your data gathered?
\newblock In \emph{Proceedings of the 24th ACM SIGKDD International Conference
  on Knowledge Discovery \& Data Mining}, KDD '18, pp.\  2210–2218, New York,
  NY, USA, 2018. Association for Computing Machinery.
\newblock ISBN 9781450355520.
\newblock \doi{10.1145/3219819.3219994}.
\newblock URL \url{https://doi.org/10.1145/3219819.3219994}.

\bibitem[Silverman(1981)]{silverman}
Silverman, B.~W.
\newblock Using kernel density estimates to investigate multimodality.
\newblock \emph{Journal of the Royal Statistical Society. Series B
  (Methodological)}, 43\penalty0 (1):\penalty0 97--99, 1981.
\newblock ISSN 00359246.
\newblock URL \url{http://www.jstor.org/stable/2985156}.

\bibitem[Xiao et~al.(2017)Xiao, Rasul, and Vollgraf]{f-mnist}
Xiao, H., Rasul, K., and Vollgraf, R.
\newblock Fashion-mnist: a novel image dataset for benchmarking machine
  learning algorithms, 2017.

\end{thebibliography}
\bibliographystyle{icml2024}

\newpage
\appendix
\onecolumn

\begin{table*}[!h]
    \centering
    \caption{Datasets for our experiments: Size indicates data instances, Dimension shows original encoding's flattened size, and k represents ground-truth class labels.
     \vspace{0.1in}
    }
    \begin{tabular}{l l l l l l l}
    \toprule
    \textbf{Dataset} & \textbf{Type} & \textbf{Description} & \textbf{Size} & \textbf{Dimension} & \textbf{k} & \textbf{Source} \\
    \midrule
    USPS & Image & Handwritten digits & 10000 & 256 & 10 & \citet{hull1994database} \\
    MNIST & Image & Handwritten digits & 10000 & 784 & 10 & \citet{lecun1998mnist} \\
    F-MNIST & Image & Zalando's article image & 10000 & 784 & 10 & \citet{f-mnist} \\
    HAR & Time-series & Smartphone-based activity & 2947 & 561 & 5 & \citet{uci} \\
    Optdigits & Image & Handwritten digits & 1797 & 64 & 10 & \citet{uci} \\
    Pendigits & Time-series & Handwritten digits & 10992 & 16 & 10 & \citet{uci} \\
    Isolet & Spectral & Speech recordings pronouncing letters & 6238 & 617 & 26 & \citet{uci} \\
    TCGA & Tabular & Cancer gene expression profiles & 801 & 20531 & 5 & \citet{uci} \\
    \bottomrule
    \end{tabular}
    \label{tab:datasets}
  \end{table*}

  \section{Clustering Datasets}
  \label{appendix:clust_datasets}
Table~\ref{tab:datasets} summarizes the benchmark datasets used in our experiments, varying in size, dimensions, number of classes \(k\) (considered as ground-truth number of clusters), complexity, and domain.
The datasets USPS, MNIST, and F-MNIST feature images of handwritten digits and fashion items, respectively.
Specifically, USPS and MNIST contain grayscale images of handwritten digits from 0 to 9, with MNIST having a resolution of 28 x 28 pixels.
In contrast, F-MNIST, or Fashion MNIST, encompasses grayscale images of ten clothing types, also at a resolution of 28 x 28 pixels.
The Human Activity Recognition (HAR) dataset captures data from sensors to classify human activities, such as walking and sitting.
OptDigits and Pendigits both pertain to handwritten digits: OptDigits consists of 8 x 8 resolution images, while Pendigits uses 16-dimensional vectors containing pixel coordinates.
The Isolet dataset is an assemblage of speech recordings, representing the sounds of spoken letters, characterized by vectors with 617 spectral coefficients derived from the speech.
Finally, TCGA is a compendium of gene expression profiles garnered from RNA sequencing of diverse cancer specimens, inclusive of clinical data, normalized counts, gene annotations, and pathways for five cancer types.
For USPS, MNIST, and F-MNIST, we use the test datasets with 10000 points.
For other datasets, we utilize the full set, typically containing fewer than 10000 points, except for Pendigits.
Despite better preliminary results, we we constrained the sizes of USPS, MNIST, and F-MNIST to their test sets to avoid sample number bias.
 
 \section{ Clustering Algorithms}
 \label{appendix:clust_algos}
In this section, we provide an overview of the clustering algorithms employed in our experiments. We focus on methods that automatically estimate the number of clusters.
We benchmark our approach against well-established algorithms, namely x-means \cite{x-means}, g-means \cite{x-means}, pg-means \cite{pg-means}, dip-means \cite{dip-means}, and hdbscan \cite{hdbscan}.
Given our design's adherence to the original dataspace, we exclude methods combining learning embeddings and clustering, like deep clustering approaches.
To select the best number of clusters, x-means incorporates a regularization penalty guided by the Bayesian Information Criterion (BIC), which accounts for model complexity.
However, this approach excels mainly with abundant data and distinct spherical clusters. Another extension, g-means, tests the assumption that each cluster originates from a Gaussian distribution. Given the challenges of statistical tests in high dimensions, g-means first projects cluster datapoints onto a high variance axis and then employs the Anderson-Darling test for normality. Clusters failing this test are iteratively split to identify the Gaussian mixture.
Conversely, projected g-means (pg-means) assumes a Gaussian mixture for the entire dataset, evaluating the model as a whole. It relies on the EM algorithm, constructing one-dimensional projections of both the dataset and the learned model, and subsequently assessing model fit in the projected space using the Kolmogorov-Smirnov (KS) test. This approach's strength lies in identifying overlapping Gaussian clusters of varying scales and covariances.
 
Moreover, hdbscan enhances dbscan by transforming it into a hierarchical clustering method and subsequently employs a technique to derive a flat clustering based on cluster stability.
Hdbscan aims to obtain an optimal cluster solution by maximizing the aggregate stability of chosen clusters. 
Both mp-means and fold-means build upon the wrapper method (dip-means) introduced in the work of \citet{dip-means}.
Dip-means draws upon this approach and uses the dip-dist criterion internally.
These methods incrementally increase cluster count and apply unimodality tests to clusters shaped by k-means.
The process concludes when all clusters are characterized as unimodal.

For baseline model hyperparameters, we adopt a significance level \( a=0.01 \) for all relevant algorithms.
The default setups are used for x-means and g-means from \citet{pyclustering}, pg-means as per \citet{pg-means}, and hdbscan following \citet{hdbscan_implementation}. The default hyperparameters for dip-means are those provided by the authors\footnote{\url{https://kalogeratos.com/psite/material/dip-means/}}.
For the folding test, we utilize the publicly available Python version\footnote{\url{https://github.com/asiffer/python3-libfolding}}.
Given the folding test's propensity to indicate multimodality, we set a cap on \( k \) for fold-means.
Specifically, for all algorithms necessitating a maximum \( k \) value, we set \( k_{\text{max}} = 300 \).

 \section{Implementation Details}
 \label{appendix:impl_details}

The code implementation, developed in Python, utilizes the diptest library\footnote{\url{https://pypi.org/project/diptest/}} for computing dip values and statistics.
The code is publicly available at: \href{https://github.com/prokolyvakis/mudpod}{https://github.com/prokolyvakis/mudpod}.


\end{document}